\newtheorem{theorem}{Theorem}
\newtheorem{definition}{Definition}
\newtheorem{remark}{Remark}
\newtheorem{corollary}{Corollary}
\DeclareMathOperator*{\argmax}{arg\,max}
\DeclareMathOperator*{\argmin}{arg\,min}
\begin{document}
	\baselineskip=1.525\baselineskip
	
	\title{Conditional strategy equilibrium\footnote{This paper is accepted at the 4th Games, Agents, and Incentives Workshop (GAIW 2022). Held as part of the Workshops at the 20th International Conference on Autonomous Agents and Multiagent Systems. We thank three anonymous referees for their valuable comments.}}
	
	\author{Lorenzo Bastianello\thanks{LEMMA, Universit\'e Paris 2 Panth\'eon-Assas, Paris, France. E-mail: lorenzo.bastianello@u-paris2.fr} \and Mehmet S. Ismail\thanks{Department of Political Economy, King's College London, UK. E-mail: mehmet.s.ismail@gmail.com.}}
	\date{\today \\ First version: March 2021}

	\maketitle
	
	\begin{abstract}
	In this note, we prove the existence of an equilibrium concept, dubbed conditional strategy equilibrium, for non-cooperative games in which a strategy of a player is a function from the other players' actions to her own actions. We study the properties of efficiency and coalition-proofness of the conditional strategy equilibrium in $n$-person games. 
\end{abstract}
	
	\section{Introduction} \label{sec:intro}

In introductory  game theory courses, the two classes of games that are  first introduced to the audience are  normal form games and extensive form games, see for instance \citet{maschler2013}. It is immediate to note the different definitions given to the concept of ``strategy.'' In normal form games a strategy for player $i$ is simply an action $a_i$ chosen from a set $A_i$. In extensive form games, strategies are more complicated objects as they specify an action for each possible node (or information set) of the game. Therefore, in this latter case, a player can condition her action on what happened previously in the game. 

Our starting point is the observation that even in  normal form games player $i$ may want to condition on player $j$'s actions. Consider the following illustrative example. Two players would like to take some vacations. Player $i$ may formulate such sentences:
\begin{itemize}
	\item ``I go on vacation \textit{if you go}.''
	\item ``I don't go on vacation \textit{if you don't go}.''
\end{itemize} 
Player $i$ is conditioning her action on the actions of player $j$, and of course the same reasoning can be done with $i$ and $j$ inverted. 

In this paper, we develop a generalization of the concept of strategy for normal form games in order to capture the intuition given in the example described above. More formally, let  $A_i$ be the set of available actions of Player $i$. We define a \textit{(pure) conditional strategy} as a function $s_i:A_{-i}\rightarrow A_i$. Hence, a conditional strategy tells which action player $i$ would play if the other players play the profile $a_{-i}\in A_{-i}$. A \textit{conditional extension} of a game is the game in which players use their conditional strategies.

Note, however, that when players use conditional strategies, it is not always clear which profile of actions will be selected, and therefore which payoffs the players will receive. Consider the the example above in which two players wanted to go on a vacation. The action set of player 1 is $A_1=\{Go,\neg Go\}$ (i.e. ``Go''  and ``don't Go'' on vacation) and the one of player 2 is $A_2=\{go,\neg go\}$. Consider the following conditional strategies $s_1:\{go,\neg go \}\rightarrow \{Go,\neg Go\}$ and $s_2:\{Go,\neg Go\}\rightarrow \{go,\neg go \}$ defined in the table below
\begin{center}
	\begingroup
	\setlength{\tabcolsep}{10pt} 
	\renewcommand{\arraystretch}{1.5} 
	\begin{tabular}{ l | l  }
		\multicolumn{1}{c}{$s_1$} & \multicolumn{1}{c}{$s_2$} \\ \hline
		$s_1(go)=Go$ & $s_2(Go)=\neg go$\\ 
		$s_1(\neg go)=\neg Go$  & $s_2(\neg Go)=go$
	\end{tabular}
	\endgroup
\end{center}
These conditional strategies create a circularity and do not pinpoint an outcome. Another problem may be that some conditional strategy profiles pinpoint multiple outcomes. Consider for instance the strategies defined  by  $s_1(go)=Go$, $s_1(\neg go)=\neg Go$ and $s_2(Go)=go$, $s_2(\neg Go)=\neg go$.\footnote{Note however that agreements can be made if one profile is unambiguously better than all others (Pareto dominates).} We call these situations \textit{disagreements}.

We solve these difficulties by extending the utility from profiles of actions (given by the original normal form game) to profiles of conditional strategies using general extension functions.\footnote{Note that the disagreement phenomenon is not unique to games in conditional strategies. A special case of a disagreement occurs with completely mixed strategies. Every completely mixed strategy profile creates a disagreement because it does not pinpoint a unique outcome. Conventionally, we extend the Bernoulli utility function to the (von Neumann-Morgenstern expected utility function in these situations. However, one could use a different extension function such as maxmin expected utility of \citet{gilboa1989}.} In the benchmark model, we assume that a conditional strategy profile that produces a disagreement induces the worst utility in the game to the players, which is similar to the disagreement outcome in the Nash bargaining model \cite{nash1953}; though, we explore more general alternatives in Section~\ref{sec:general}.

There are several interpretations available in the literature regarding conditional commitments. A straight-forward interpretation is that players submit simultaneously conditional strategies to a computer (see, e.g., Tennenholtz's \cite{tennenholtz2004} program equilibrium). Another interpretation is that players submit their strategies to a trusted third party; though, players may not need a third party if there is a common understanding of what to do when their conditional choices create a disagreement (e.g., tossing a coin to decide on the outcome of a disagreement).

We naturally define a conditional strategy equilibrium as a Nash equilibrium \cite{nash1950a} of the game played in conditional strategies. Our first result, Theorem \ref{thm:existence_pure}, shows that a pure conditional strategy equilibrium exists in every $n$-person game. The generalization of the concept of action to the one of conditional strategy allows us to prove several important results about Pareto efficiency of conditional strategy equilibria. Our Folk Theorem (Theorem \ref{thm:folk}) shows that in two-person games, any payoff above the maxmin payoff can be supported as a pure conditional equilibrium. This implies for instance that cooperation can be achieved as a conditional strategy equilibrium in the prisoner's dilemma. In three-person games, Theorem \ref{thm:Pareto_3pl} shows that there exists a Pareto optimal conditional strategy equilibrium. 

Suppose that in a game (i) the disagreement payoff is the worst payoff, and (ii) the number of players is four or more. Then, it is easy to show that every action profile can be supported as a conditional strategy equilibrium. This happens because it is possible to define a conditional strategy profile in which if a player unilaterally deviates, players end up in a disagreement situation in which everyone gets their minimal payoff. Therefore no player will want to deviate unilaterally. This shows the need to consider a stronger notion of equilibrium for $n$-person games. In Section~\ref{sec:n-person}, we introduce the strong conditional equilibrium, which is a natural extension of strong Nash equilibrium \cite{aumann1959} to conditional strategies. It is well known that a strong Nash equilibrium almost never exists. In Theorem~\ref{thm:strongCE}, we give an easy-to-check sufficient condition for its existence when conditional strategies are used.

One of the main advantages of going from pure strategies to mixed extension is that there is in general no Nash equilibrium in pure strategies but there is always one in mixed strategies. However, this comes with a computational cost in part because the set of mixed strategies is an infinite set and the existence theorem relies on a non-constructive fixed point theorem. By contrast, while conditional extension does increase the number of strategies of each player, the cardinality of the set of strategies has a finite bound. All of our existence results (i.e., theorems 1$-$5) except Theorem~\ref{thm:conditional_mixed_extension} are constructive.

\subsection{A brief literature review}\label{sec:literature}

This section gives a brief overview of the previous contributions to the idea of conditioning. We do not attempt to give a comprehensive review of the related literature in this short note.

Conditioning players' actions on the history of the game is a concept that is well known in extensive form games. In normal form games, the idea that players may be better off conditioning their strategies on opponents' actions dates back at least to \citet{schelling1956essay}. \citet{schelling1956essay} discussed bargaining and advanced the idea that  irreversible commitment to a certain strategy may serve as a threat and may  benefit the player: ``[a player] must commit himself to a \textit{conditional} choice'' (the emphasis is present in the original paper). However, \citet{schelling1956essay} does not entirely clarify whether the game is played sequentially or simultaneously. In the context of irreversible commitments, \citet{brams1994} distinguishes between compellent and deterrent threat.

\citet{howard1971} first studied conditional commitments in the context of prisoners' dilemma and other games. In the book of \citet{howard1971} only one player at a time can condition on the other action, while the other player can only play a fixed action (in the terminology of this paper, a constant conditional strategy). He calls these games metagames.

In a seminal work, \citet{tennenholtz2004} introduced the notion of program equilibrium. The main idea is that players write computer programs that in turn select a strategy.  Strategies are selected by conditioning on other players' programs. In this setting \citet{tennenholtz2004} shows that cooperation can be achieved in the prisoner's dilemma. This approach has been extended by \citet*{kalai2010}. Instead of computer programs, players can select general conditional commitment devices. \citet{kalai2010} prove a folk theorem in this setting. The main difference between \citet{tennenholtz2004} and \citet{kalai2010} is that Kalai et al. use conditioning devices in order to overcome the problems of circularities and multiplicities explained in the Introduction.

The rest of the paper formally proves the results presented in the Introduction.  Section \ref{sec:definition} gives the definition of pure conditional strategies. Section \ref{sec:results} shows how to reach Pareto efficiency in two and three-person games. Section \ref{sec:n-person} studies the $n$-person case. Section \ref{sec:general} considers a more general way to solve possible disagreements induced by conditional strategy profiles. Finally Section \ref{sec:mixed} introduces mixed strategies in conditional games.

\section{Definition of pure conditional strategies}\label{sec:definition}

Let $(A_i, u_i)_{i\in N}$ be an $n$-person noncooperative game, where $N=\{1,...,n\}$ is the finite set of players, $A_i=\{a^1_i, a^2_i, ..., a^{m_i}_i \}$ finite non-empty action set, and $u_i:A\rightarrow \mathbb{R}_+$ the (von Neumann-Morgenstern) utility function of player $i\in N$. A strategy profile is denoted by $a\in A = \times_{i\in N}A_i$.\footnote{Without loss of generality, we consider non-negative utility function.}

A pure conditional strategy of player $i$ is a function $s_i:A_{-i}\rightarrow A_i$. A pure conditional strategy profile is $s\in S=\times_{i\in N} S_{i}$ where $S_i$ is the set of conditional strategies of player $i$. 

\begin{remark}
	The number of pure conditional strategies of player $i$ is given by $m_i^{\prod_{j\neq i} m_j}=|S_i|$ because $\prod_{j\neq i} m_j$ gives the number of pure action profiles of everyone but $i$.
\end{remark}

A pure conditional strategy profile $s\in S$ is an \textit{agreement} if there exists a fixed point $a$ of $s$---i.e., for all $i$ $s_i(a_{-i})=a_i$---such that for any  fixed point $a'$, $u_i(a)\geq u_i(a')$. The intuition is that everybody agrees on the unique strategy profile $a$. If $s\in S$ is not an agreement, then it is called a \textit{disagreement}.

Let $B=\{s\in S| s~\text{is an agreement}\}$ be the set of agreements. The utility function $U_i:S\rightarrow \mathbb{R_+}$ is defined as follows. 
\[
U_i(s) = 
\begin{cases}
u_i(a) & \text{if}~s\in B \\
0 & \text{if}~s\in S\setminus B,
\end{cases}
\] 
where $a=s(a)$. This is a well-defined function since every $s\in B$ has a unique fixed point, and the disagreement payoff is zero. Note that $U_i|_{B}=u_i$.

Let $(S_i, U_i)_{i\in N}$ denote a game in \textit{conditional extension} of the game $(A_i, u_i)_{i\in N}$. 

\begin{remark}
	\label{rem:fixed_point_existence}
	In every conditional extension, agreements or fixed points of strategy profiles exist because $A_i$ is nonempty for each $i$.
\end{remark}

We next define conditional strategy equilibrium.

\begin{definition}
	A conditional strategy profile $s^*\in S$ is called a \textit{conditional strategy equilibrium} if for all $i$ $s^*_i\in \argmax U_i(s^*)$ or equivalently
	\[
	U_i(s^*)\geq U_i(s'_i, s^*_{-i}).
	\]
\end{definition}

In other words, a conditional strategy equilibrium (CSE) is a self-enforcing agreement in the space of conditional strategy profiles, just like a Nash equilibrium \citep{nash1950a} is a self-enforcing agreement in the space of mixed strategy profiles.

\section{Results}
\label{sec:results}

\begin{theorem}[Existence]
	\label{thm:existence_pure}
	In every $n$-person game $G=(S_i, U_i)_{i\in N}$, there exists a pure conditional equilibrium.
\end{theorem}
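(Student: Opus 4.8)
The plan is to exhibit a pure conditional strategy equilibrium explicitly, exploiting two features of the conditional extension: the action sets are finite, so it suffices to construct a pure-strategy Nash equilibrium of the finite game $(S_i,U_i)_{i\in N}$ directly, and the disagreement payoff is $0$, which is the minimal value of every $u_i$ and hence the harshest available punishment. First I would fix a target profile $a^*\in A$ that is individually rational in the pure maximin sense, namely one with $u_i(a^*)\ge \underline{v}_i:=\max_{a_i}\min_{a_{-i}}u_i(a_i,a_{-i})$ for every $i$. Such a profile always exists: letting each $a_i^*$ be a maximin action, the resulting profile guarantees $u_i(a^*)\ge \underline{v}_i$ for all $i$, since $a_i^*$ secures $\underline{v}_i$ against every $a_{-i}$.

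Next I would build the reaction functions $s^*$ so that $a^*$ is enforced on the equilibrium path and every unilateral deviation is met with a payoff-minimizing response. On path, each $s_i^*$ returns $a_i^*$ when the others play $a_{-i}^*$. Off path, the non-deviators should answer a lone deviator $i$ who induces the action $a_i$ with the profile $\argmin_{a_{-i}}u_i(a_i,a_{-i})$, read off as a function of $a_i$; against this response the best a deviating $i$ can obtain at any fixed point is $\max_{a_i}\min_{a_{-i}}u_i(a_i,a_{-i})=\underline{v}_i\le u_i(a^*)$. Whenever the observed actions are not attributable to a single deviator---in particular when a would-be deviator tries to ``frame'' another player or to free-ride on someone else's punishment---the reactions are set to panic, producing two incomparable fixed points so that the profile is a disagreement and every player, the deviator included, receives $0$.

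I would then verify the two required properties. For $s^*$ itself I would check that $a^*$ is the Pareto-dominant fixed point: the on-path equations force $a^*$, and I would choose the off-path part of each $s_i^*$ so that no player cooperates in completing a fixed point that punishes herself, killing all spurious fixed points and leaving $a^*$ unique; hence $s^*\in B$ and $U_i(s^*)=u_i(a^*)$. For deviations, I would show that after any $i$ switches to $s_i'$ every reachable fixed point either gives $i$ at most $\underline{v}_i$ (payoff-minimizing punishment) or coexists with an incomparable fixed point (disagreement, payoff $0$); since $\underline{v}_i\le u_i(a^*)$ and $0\le u_i(a^*)$, player $i$ cannot improve, so $s^*$ is a conditional strategy equilibrium.

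The hard part is the off-path design in the many-player case. With $n=2$ the ``others'' are a single player, who can implement $\argmin_{a_2}u_1(a_1,\cdot)$ directly as a reaction to the unique possible deviator, and the argument is immediate. With $n\ge 3$ the punishers must coordinate their minimizing response while each observes only the other players' actions, so a punisher's own move looks like a further deviation and the identity of the true deviator is not locally decidable; simultaneously one must ensure these punishment responses never create a fixed point that is incomparable to $a^*$ in $s^*$ (which would collapse the equilibrium payoff to $0$) yet do trigger a disagreement exactly when a deviator tries to exploit them. Resolving this coordination---turning ``identify and punish the lone deviator, otherwise disagree'' into well-defined reaction functions $s_{-i}^*$---is the crux, and it is precisely here that the disagreement payoff $0$, being the minimum of every $u_i$, does the decisive work.
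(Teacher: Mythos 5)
There is a genuine gap, and you name it yourself: for $n\ge 3$ you never actually define the off-path reaction functions, deferring ``identify and punish the lone deviator, otherwise disagree'' as ``the crux'' --- but that crux \emph{is} the theorem, since the statement covers all $n$. Your two-player case does work and essentially reproduces the paper's folk theorem (Theorem~\ref{thm:folk}), but that result is separate from the existence theorem. Worse, the device you propose for the unresolved cases --- having the non-deviators ``panic, producing two incomparable fixed points'' --- is not implementable by $s^*_{-i}$ alone. A fixed point of $(s'_i,s^*_{-i})$ requires $s'_i(a_{-i})=a_i$ in the deviator's own coordinate, so the deviator can always veto any unwanted fixed point by choosing $s'_i(a_{-i})\neq a_i$ there: punishers can \emph{prevent} fixed points, but they cannot \emph{create} them against the deviator's will. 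The implementable deterrents are therefore (a) minimax responses read off the deviation, which is exactly your $n=2$ argument, or (b) reactions under which no unilateral deviation can complete any fixed point at all, so every deviation earns the disagreement payoff $0$; option (b) is what the paper uses in Theorems~\ref{thm:Pareto_3pl} and~\ref{thm:strongCE}, and making it precise requires the explicit case-by-case construction (equations~(\ref{eq:1})--(\ref{eq:9})) that you skipped. A smaller inaccuracy: your claim that the minimax construction leaves $a^*$ the \emph{unique} fixed point is false in general; spurious fixed points can survive, and the paper's folk-theorem proof instead shows they are Pareto-dominated by $\bar a$, which suffices because an agreement requires a dominant fixed point, not a unique one.

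For contrast, the paper's actual proof of Theorem~\ref{thm:existence_pure} sidesteps deviator identification and punishments entirely via a hierarchical (Stackelberg-cascade) construction: player $1$'s strategy best-responds pointwise to every $a_{-1}$; for $k\ge 2$, player $k$'s strategy depends only on the actions of players $k+1,\dots,n$ and plays $\hat a_k(a_{k+1},\dots,a_n)$, the action maximizing $u_k$ given the anticipated chain of responses of players $1,\dots,k-1$; player $n$'s strategy is constant. The triangular dependence forces a unique fixed point by backward substitution ($a_n=\hat a_n$, then $a_{n-1}=\hat a_{n-1}(\hat a_n)$, and so on), so $\hat s$ is an agreement; a deviation by player $k$ can only shift the fixed point along the cascade, yielding at most the value $\hat a_k$ was chosen to maximize, and player $1$ is best-responding by construction. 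This is uniform in $n$ and needs no individual-rationality target, no disagreement-forcing, and no coordination among punishers --- precisely the machinery your plan leaves unbuilt.
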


\begin{proof}
	Let $BR_i$ denote the best-response correspondence of player $i$, i.e., $BR_i(s_{-i})=\{s_{i}\in S_{i} | s_i\in \argmax_{s'_{i}\in \S_i} U_i(s'_{i}, s_{-i})\}.$ Let $\bar{S}_i$ denote the set of constant strategies of player $i$, i.e., $\bar{s}_i\in \bar{S}_i$ if and only if for every $a_{-i}$ $\bar{s}_i(a_{-i})=a_i$.
	
	Player 1 chooses $\hat{s}_1$ such that $\hat{s}_1(a_{-1})\in BR_1(a_{-1})$ for every $a_{-1}\in A_{-1}$. 
	Let $$\hat{a}_2(a_3,...,a_n)\in \argmax_{a'_{2}\in A_2} u_2(\hat{s}_1(a'_2,a_3,...,a_n),a'_{2}, a_3,...,a_n).$$ Then, define the conditional strategy of player 2 as $\hat{s}_{2}(a_{-2})=\hat{a}_2(a_3,...,a_n)$ for every $a_{-2}$. In plain words, fixing the action profile $(a_3,...,a_n)$ player 2 chooses an action that maximizes her utility given that player 1 best responds to her action.
	
	Analogously, consider $\hat{a}_3(a_4,...,a_n)$ in  
	\[\argmax_{a'_{3}\in A_3} u_3(\underbrace{\hat{s}_1(\underbrace{\hat{a}_2(a'_3,...,a_n)}_\text{$\hat{a}_2$},a'_3,a_4,...,a_n)}_\text{$\hat{a}_1$},\underbrace{\hat{a}_2(a'_3,...,a_n)}_\text{$\hat{a}_2$}, a'_3, a_4,...,a_n).
	\]
	Then, define the conditional strategy of player 3 as $\hat{s}_{3}(a_{-3})=\hat{a}_3(a_4,...,a_n)$ for every $a_{-3}$. Conditional strategy of player $k$ is defined as $\hat{s}_{k}(a_{-k})=\hat{a}_k(a_{(k+1)},...,a_n)$ for every $a_{-k}$ where $\hat{a}_k$ is defined analogously. For player $n$, $$\hat{a}_n\in \argmax_{a'_{n}\in A_n} u_n(\hat{a}_1,..., \hat{a}_{n-1},a'_n),$$ and $\hat{s}(a_{-n})=\hat{a}_n$. Define with a slight abuse of notation $\hat{a}=(\hat{a}_1,\hat{a}_2,...,\hat{a}_n)$ where $\hat{a}_i=\hat{a}_i(\hat{a}_{i+1},...,\hat{a}_n)$
	
	Next, we show that $\hat{s}\in B$, i.e., it is an agreement. We will prove that $s(\hat{a})=\hat{a}$ and that for all $a$ such that $\hat{s}(a)=a$, then $a=\hat{a}$. This is because 
	$\hat{s}_n(a_{-n})=\hat{a}_n$ by definition of $\hat{s}_n$, so $a_n=\hat{a}_n$. Then, given $\hat{a}_n$, $\hat{s}_{n-1}(a_{-(n-1)})=\hat{a}_{n-1}(a_n)=\hat{a}_{n-1}(\hat{a}_n)=\hat{a}_{n-1}$ by definition of $\hat{s}_n$, so $a_{n-1}=\hat{a}_{n-1}$. We repeat this process till we obtain $a_2=\hat{a}_2$. Then, notice that $\hat{s}_{1}(a_{-1})=BR_1(a_{-1})=BR_1(\hat{a}_{-1})=\hat{a}_1$.
	
	Now we show that $\hat{s}$ is a conditional equilibrium. First, we prove that player $n$ has no unilateral profitable deviation from $\hat{s}$. Suppose by way of contradiction that $\tilde{s}_n$ is such a deviation. Consider $\tilde{s}=(\tilde{s}_n,\hat{s}_{-n})$, which must be an agreement in $B$ because otherwise $U_n(\tilde{s})=0$. Then, there must be a unique $\tilde{a}\in A$ such that $\tilde{s}(\tilde{a})=\tilde{a}$. Let $\tilde{a}_n=\tilde{s}_n(\tilde{a}_{-n})$. Then, $u_n(\tilde{a}_n,\hat{a}_{-n})\leq u_n(\hat{a})=\max_{a'_{n}\in A_n} u_n(a'_n,\hat{a}_{-n})$. Thus, $\tilde{s}_n$ is not a unilateral profitable deviation. For every player $k>1$, the reasoning is analogous. Finally, we show that player $1$ has no unilateral profitable deviation. Fixing $\hat{s}_{-1}$ induces the action profile $\hat{a}_{-1}$. Since $\hat{s}_{1}$ is by definition a best reply to $\hat{a}_{-1}$, player 1 cannot strictly benefit from a unilateral deviation.
\end{proof}

\begin{theorem}[$2$-player Folk theorem]
	\label{thm:folk}
	Let $G=(S_i, U_i)_{i\in \{1,2\}}$ be a two-person game and $\bar{a}\in A$ be an action profile. Players each receive at least their individually rational (maximin) payoff at $\bar{a}$ if and only if it can be supported as a conditional strategy equilibrium outcome.
\end{theorem}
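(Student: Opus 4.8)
The plan is to prove both implications by means of \emph{conditional punishment} strategies, which play the role that trigger strategies play in repeated games but here act within a single conditional extension. Throughout, write $v_i=\max_{a_i\in A_i}\min_{a_{-i}\in A_{-i}}u_i(a_i,a_{-i})$ for the maximin payoff of player $i$ and let $a_i^\star$ denote a maximin action attaining it; both exist by finiteness of the action sets.

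For necessity, suppose $s^\ast$ is a CSE that is an agreement with outcome $\bar a$, so $U_i(s^\ast)=u_i(\bar a)$. I would test player $i$ against the constant deviation $\bar s_i\in\bar S_i$ defined by $\bar s_i\equiv a_i^\star$. Because $i$ then plays a constant, $(\bar s_i,s^\ast_{-i})$ has the single fixed point $(a_i^\star,\,s^\ast_{-i}(a_i^\star))$ and is therefore an agreement, whence $U_i(\bar s_i,s^\ast_{-i})=u_i(a_i^\star,s^\ast_{-i}(a_i^\star))\ge \min_{a_{-i}}u_i(a_i^\star,a_{-i})=v_i$. Since $s^\ast$ is a CSE, $u_i(\bar a)=U_i(s^\ast)\ge U_i(\bar s_i,s^\ast_{-i})\ge v_i$, which is the claimed inequality for each player.

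For sufficiency, assume $u_i(\bar a)\ge v_i$ for both players $1$ and $2$, and construct
\[
s_1(a_2)=\begin{cases}\bar a_1 & a_2=\bar a_2\\ p_1(a_2) & a_2\neq \bar a_2,\end{cases}
\qquad
s_2(a_1)=\begin{cases}\bar a_2 & a_1=\bar a_1\\ p_2(a_1) & a_1\neq \bar a_1,\end{cases}
\]
where $p_1(a_2)\in\argmin_{a_1}u_2(a_1,a_2)$ and $p_2(a_1)\in\argmin_{a_2}u_1(a_1,a_2)$ minimize the deviating opponent's payoff. I would first check that $\bar a$ is a fixed point and is the dominant one, so that $(s_1,s_2)$ is an agreement with $U_i(s_1,s_2)=u_i(\bar a)$. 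The key structural fact is that matching one coordinate of $\bar a$ forces the other (e.g.\ $a_1=\bar a_1$ gives $a_2=s_2(\bar a_1)=\bar a_2$), so any fixed point other than $\bar a$ is entirely off-path; at such a point player $2$ gets $\min_{a_1}u_2(a_1,a_2)\le v_2\le u_2(\bar a)$ and player $1$ gets $\min_{a_2}u_1(a_1,a_2)\le v_1\le u_1(\bar a)$, so $\bar a$ weakly Pareto-dominates every fixed point.

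Finally I would verify the equilibrium inequality. For any deviation $\tilde s_2$, the profile $(s_1,\tilde s_2)$ either is not an agreement, giving $U_2=0\le u_2(\bar a)$, or is an agreement whose outcome is a fixed point of $(s_1,\tilde s_2)$; and every such fixed point yields player $2$ at most $u_2(\bar a)$, since $\tilde a_2=\bar a_2$ forces $\tilde a=\bar a$, while $\tilde a_2\neq\bar a_2$ triggers the punishment branch and gives $\min_{a_1}u_2(a_1,\tilde a_2)\le v_2\le u_2(\bar a)$. Hence $U_2(s_1,\tilde s_2)\le U_2(s_1,s_2)$, symmetrically for player $1$, so $(s_1,s_2)$ is a CSE with outcome $\bar a$. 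The step I expect to be the main obstacle is the bookkeeping around the spurious fixed points introduced by the punishment branches: the whole argument rests on showing these off-path fixed points are Pareto-dominated by $\bar a$, and it is precisely because the punisher \emph{responds} to the realized action (rather than committing in advance) that the binding deviation bound collapses to the \emph{maximin} value $v_i$ rather than the generally larger minimax value.
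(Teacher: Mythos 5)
Your proposal is correct and follows essentially the same route as the paper: the same punishment construction $s_i(\bar{a}_{-i})=\bar{a}_i$ with $\argmin$ of the opponent's payoff off-path, the same verification that $\bar{a}$ Pareto-dominates the spurious fixed points (so the profile is an agreement and deviations are capped by the maximin value), and the same constant maximin-action deviation for the necessity direction. Your explicit observation that matching one coordinate of $\bar{a}$ forces the other is a step the paper glosses over, and is a welcome bit of extra care.
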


\begin{proof}
	First, we show that if at some profile $\bar{a}$ players each receive at least their individually rational payoff, then we can construct a conditional strategy equilibrium $s$ in which the unique fixed point of the profile is $\bar{a}$, i.e., $s_1(\bar{a}_2)=\bar{a}_1$ and $s_2(\bar{a}_1)=\bar{a}_2$. We define $s$ as follows. For every $i$ let $s_i(\bar{a}_{-i})=\bar{a}_i$. For every $i$ and every $a_{-i}\neq \bar{a}_{-i}$ define $s_i(a_{-i})$ such that $s_i(a_{-i})\in \argmin_{a'_{i}\in A_i}u_{-i}(a'_i,a_{-i})$. 
	
	We next show that $s$ is an agreement. By construction, we have $s(\bar{a})=\bar{a}$. Take $\hat{a}\neq \bar{a}$ such that $s(\hat{a})=\hat{a}$. Then, we have $\hat{a}_1\in \argmin_{a'_{1}\in A_1}u_2(a'_1,\hat{a}_{2})$ and $\hat{a}_2\in \argmin_{a'_{2}\in A_2}u_1(\hat{a}_{1}, a'_2)$. 
	Since at $\bar{a}$ each player receives their individually rational payoff, we have for every $a'_i\neq \bar{a}_i$ we have $u_i(\bar{a})\geq \min_{a_{-i}\in A_{-i}}u_{i}(a'_i,a_{-i})$ and in particular for $a'_i\neq \hat{a}_i$, $u_i(\bar{a})\geq \min_{a_{-i}\in A_{-i}}u_{i}(\hat{a}_i,a_{-i})=u_i(\hat{a})$. Therefore $\bar{a}$ Pareto dominates $\hat{a}$, so $s$ is an agreement.
	
	Next, we show that there is no unilateral profitable deviation from $s$. To reach a contradiction, suppose that for some $i$ $s'_i$ is a unilateral profitable deviation from $s$. If $(s'_i, s_{-i})$ is not an agreement, then $U_i(s'_i, s_{-i})=0$; so $s'_i$ is not a unilateral profitable deviation. Suppose now that $(s'_i, s_{-i})$ is an agreement with  fixed  point $a'$. Then by individual rationality and definition of $s_{-i}$, $u_i(\bar{a})\geq u_{i}(a'_i,s_{-i}(a'_i))$ so $s'_i$ is not a unilateral profitable deviation. This completes the proof that $s$ is a pure conditional strategy equilibrium.
	
	Second, we show that if a profile $s$ is a conditional strategy equilibrium, then at $s$ players each receive at least their individually rational payoff. To reach a contradiction, suppose that there is a player $i$ such that $U_i(s) < \max_{a_{i}\in A_{i}}\min_{a_{-i}\in A_{-i}}u_{i}(a_i,a_{-i})$. Then, player $i$ has a profitable deviation to the constant conditional strategy $s'_i$ defined for every $a'_{-i}\in A_{-i}$ as  $s'_i(a'_{-i})=a_i''$ where $a_i''\in \argmax_{a_{i}\in A_{i}}\min_{a_{-i}\in A_{-i}}u_{i}(a_i,a_{-i})$. Note that  profile $(s'_i,s_{-i})$ defines the unique fixed point $(a_i'',s_{-i}(a_i''))$. Then $s'_i$ is a profitable deviation because $$U(a_i'',s_{-i}(a_i''))\geq \max_{a_{i}\in A_{i}}\min_{a_{-i}\in A_{-i}}u_{i}(a_i,a_{-i})>U_i(s).$$
\end{proof}

The next theorem shows that Pareto optimality of conditional strategy equilibrium can be achieved in not only two-person games but also three-person games.

\begin{theorem}[$3$-player Pareto]
	\label{thm:Pareto_3pl}
	In every three-person game $G=(S_i, U_i)_{i\in \{1,2,3\}}$ there exists a Pareto optimal conditional strategy equilibrium.
\end{theorem}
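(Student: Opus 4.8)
\emph{The plan is to reduce the statement to Theorem~\ref{thm:existence_pure} by supporting a well-chosen Pareto optimal outcome with a reversion threat.} Let $\hat{s}$ be the conditional strategy profile constructed in the proof of Theorem~\ref{thm:existence_pure}. It is a CSE with a unique fixed point $\hat{a}$, and, crucially, it is \emph{triangular}: each $\hat{s}_k$ depends only on the actions $a_{k+1},\ldots,a_n$ of the higher-indexed players (player $1$ best-responds to everyone, player $n$ is constant). The first step I would isolate is a lemma that follows from triangularity by back-substitution from player $n$ down to player $1$: for every player $i$ and every action $c\in A_i$, the profile obtained from $\hat{s}$ by replacing $\hat{s}_i$ with the constant strategy equal to $c$ has a \emph{unique} fixed point. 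Next, since the game is finite, I would pick $\bar{a}$ to be a Pareto-maximal element of the nonempty set of profiles that weakly Pareto dominate $\hat{a}$; then $\bar{a}$ is Pareto optimal and $u_i(\bar{a})\ge u_i(\hat{a})$ for every $i$. Finally I define the candidate $\bar{s}$ by $\bar{s}_i(a_{-i})=\bar{a}_i$ when $a_{-i}=\bar{a}_{-i}$ and $\bar{s}_i(a_{-i})=\hat{s}_i(a_{-i})$ otherwise; that is, players coordinate on $\bar{a}$ on path and revert to the existence equilibrium off path.

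The second step is to show that $\bar{s}$ is an agreement with outcome $\bar{a}$. The key observation is that the baseline exception $a_{-l}=\bar{a}_{-l}$ can occur at a fixed point only at $\bar{a}$ itself: if $b$ is a fixed point of $\bar{s}$ with $b\neq\bar{a}$, then no player $l$ can have $b_{-l}=\bar{a}_{-l}$ (such a $b$ would be forced to equal $\bar{a}$), so every player uses the reversion value and $b$ is a fixed point of $\hat{s}$, whence $b=\hat{a}$. Thus the fixed points of $\bar{s}$ lie in $\{\bar{a},\hat{a}\}$; since $\bar{a}$ is a fixed point and $u_i(\bar{a})\ge u_i(\hat{a})$ for all $i$, the profile $\bar{a}$ is the Pareto-dominant fixed point. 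Hence $\bar{s}\in B$ and $U_i(\bar{s})=u_i(\bar{a})$, a Pareto optimal payoff vector.

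The third step, which I expect to be the main obstacle, is deterrence. Fix $i$ and a deviation $s_i'$. If $(s_i',\bar{s}_{-i})$ is a disagreement, then $U_i(s_i',\bar{s}_{-i})=0\le u_i(\bar{a})$, so the deviation is not profitable. Otherwise it is an agreement with Pareto-dominant fixed point $a^*$; if $a^*=\bar{a}$ the payoff is exactly $u_i(\bar{a})$. The decisive case is $a^*\neq\bar{a}$: by the same exception argument, no player $l\neq i$ sees a baseline profile at $a^*$, so $a^*$ is a fixed point of $(s_i',\hat{s}_{-i})$, and therefore also of the profile in which player $i$ instead plays the constant $a^*_i$. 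By the triangularity lemma this latter profile has a unique fixed point, which must be $a^*$, so its value to $i$ is $u_i(a^*)$; the CSE property of $\hat{s}$ then forces $u_i(a^*)\le u_i(\hat{a})\le u_i(\bar{a})$. Consequently no deviation yields more than $u_i(\bar{a})$, so $\bar{s}$ is a CSE, and its outcome $\bar{a}$ is Pareto optimal.

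The heart of the argument is the deterrence dichotomy: any fixed point that abandons the $\bar{a}$-anchor necessarily coincides with a fixed point of the reversion profile $\hat{s}$, which is already deviation-proof, so a deviator can secure at most $\hat{a}$'s payoff, and that is no larger than $\bar{a}$'s by the choice of $\bar{a}$; Pareto optimality of $\bar{a}$ and the zero disagreement payoff dispose of the remaining cases. The only genuinely delicate point is the triangularity lemma (uniqueness of the fixed point under a single constant replacement), and once it is established the rest is bookkeeping. I would also remark that nothing in this argument uses $n=3$: the same construction produces a Pareto optimal CSE for every $n$, so the three-player hypothesis appears to be removable.
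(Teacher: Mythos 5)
Your proposal is correct, but it reaches the theorem by a genuinely different route than the paper. The paper never invokes Theorem~\ref{thm:existence_pure}: it fixes a Pareto optimal profile $\bar a$ at which some player $i$ attains their \emph{maximum payoff in the whole game} (such a profile exists in every finite game), and hand-builds the conditional strategies via the nine case equations (\ref{eq:1})--(\ref{eq:9}) so that $\bar a$ is the unique fixed point and neither of the other two players can unilaterally create \emph{any} fixed point, so their deviations earn the disagreement payoff $0$, while $i$ has no incentive to move; the bulk of that proof is the case analysis verifying conditions (\ref{eq:10}) and (\ref{eq:11}). Your bootstrap from the existence construction is sound at every step I can find: the triangularity lemma holds by exactly the back-substitution you describe (each $\hat s_k$, $k\ge 2$, depends only on $a_{k+1},\dots,a_n$, so replacing any one coordinate by a constant still pins the cascade $a_n,\dots,a_2$ and then $a_1=\hat s_1(a_{-1})$); a Pareto-maximal element of the set of profiles weakly dominating $\hat a$ is globally Pareto optimal by transitivity; and the deterrence dichotomy works because a dominant fixed point $a^*\neq\bar a$ of $(s_i',\bar s_{-i})$ triggers the anchor exception for no $l\neq i$, hence is a fixed point of $(s_i',\hat s_{-i})$ and of the profile where $i$ plays the constant $a^*_i$, which by the lemma is an agreement worth $u_i(a^*)$, so the CSE property of $\hat s$ gives $u_i(a^*)\le u_i(\hat a)\le u_i(\bar a)$. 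Two caveats worth making explicit. First, your argument (like the paper's own proof of Theorem~\ref{thm:folk}) relies on the convention that an agreement may carry a dominated second fixed point with payoffs read at the dominant one --- under $\bar s$ the point $\hat a$ may coexist with $\bar a$ --- which is consistent with the paper's definition but sits uneasily with its parenthetical claim that every agreement has a unique fixed point. Second, you reduce to the \emph{proof} of Theorem~\ref{thm:existence_pure}, not its statement: triangularity is a property of the particular $\hat s$ constructed there, so your lemma must be stated about that construction, as you do. As for what each approach buys: yours is uniform in $n$ and proves the result for every $n$ in one stroke, subsuming the paper's separate remarks for $n\ge 4$ in Section~\ref{sec:n-person} (and for $n=2$ your $\bar a$ dominates the CSE outcome $\hat a$, hence is individually rational, consistent with Theorem~\ref{thm:folk}); the paper's construction is self-contained and tells you more about \emph{which} Pareto optima are supportable in three-player games --- any one at which some player is at their game maximum --- whereas your method supports only the particular Pareto improvement of $\hat a$ it selects.
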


\begin{proof}
	Let $G$ be a three-person game and $\bar{a}$ be a Pareto optimal action profile such that for some player $i$ $u_i(\bar{a})$ is the maximum payoff of $i$ in $G$. Note that $\bar{a}$ exists in every finite three-person game. 
	
	We next construct a conditional strategy equilibrium $s$ in which the unique fixed point of the profile is $\bar{a}$. For every player $l$, define $s_l(\bar{a}_{-l})=\bar{a}_l$. For the rest of the proof  fix an action profile $a'\in A$ such that for every $l$, $a'_l\neq \bar{a}_l$. Define 
	
	\begin{equation}
	\label{eq:1}
	s_i(a_k,\bar{a}_j)=a'_i \text{ for all } a_k\neq \bar{a}_k, 
	\end{equation}
	\begin{equation}
	\label{eq:2}
	s_i(\bar{a}_k,a_j)=a'_i  \text{ for all }  a_j\neq \bar{a}_j,
	\end{equation}
	\begin{equation}
	\label{eq:3}
	s_i(a_k,a_j)=\bar{a}_i \text{ for all } a_k\neq \bar{a}_k \text{ and } a_j\neq \bar{a}_j,
	\end{equation}
	\begin{equation}
	\label{eq:4}
	s_k(\bar{a}_i,a_j)=\bar{a}_k \text{ for all } a_j\neq \bar{a}_j,
	\end{equation}
	\begin{equation}
	\label{eq:5}
	s_k(a_i,\bar{a}_j)=\bar{a}_k \text{ for all } a_i\neq \bar{a}_i,
	\end{equation}
	\begin{equation}
	\label{eq:6}
	s_k(a_i,a_j)=a'_k \text{ for all } a_i\neq \bar{a}_i \text{ and } a_j\neq \bar{a}_j,
	\end{equation}
	\begin{equation}
	\label{eq:7}
	s_j(\bar{a}_i,a_k)=\bar{a}_j \text{ for all } a_k\neq \bar{a}_k,
	\end{equation}
	\begin{equation}
	\label{eq:8}
	s_j(a_i,\bar{a}_k)=\bar{a}_j \text{ for all } a_i\neq \bar{a}_i,
	\end{equation}
	\begin{equation}
	\label{eq:9}
	s_j(a_i,a_k)=a'_j \text{ for all } a_i\neq \bar{a}_i \text{ and } a_k\neq \bar{a}_k.
	\end{equation}
	
	First notice that $\bar{a}$ is the unique fixed point point of $s$ by construction. We show that no player has a unilateral profitable deviation from $s$. Clearly, $s_i$ is already a best response to $s_{-i}$ because player $i$ receives the highest payoff in the game. 
	
	Note that if for every $a\in A\setminus {\bar{a}}$, the followings hold
	\begin{equation}
	\label{eq:10}
	(s_i(a_k,a_j),s_j(a_i,a_k))\neq (a_i,a_j)
	\end{equation}
	\begin{equation}
	\label{eq:11}
	(s_i(a_k,a_j),s_k(a_i,a_j))\neq (a_i,a_k)
	\end{equation}
	then neither player $k\neq i$ nor player $j\neq i$ can unilaterally deviate from $s$ and create an agreement. Hence if they deviate they would both get 0. To see this, consider e.g. expression (\ref{eq:11}) and suppose that $(s_i(a_k,a_j),s_j(a_i,a_k))= (a_i,a_j)$. Then player $k$ could create an agreement by deviating to $s_k'(a_i,a_j)=a_k$. Therefore we only need to show expressions (\ref{eq:10}) and (\ref{eq:11}).
	
	Fix $a\in A$, $a\neq \bar{a}$. There are two cases to consider: (1) $s_i(a_k,a_j)=\bar{a}_i$; and (2) $s_i(a_k,a_j)=a'_i$.
	
	Case 1.1: $s_i(a_k,a_{j})=\bar{a}_i$ and $a_i=\bar{a}_i$. Then $a_k\neq \bar{a}_k$ and $a_j\neq \bar{a}_j$ by Equation~\ref{eq:3}. But then $s_k(a_{-k})=\bar{a}_k$ and $s_j(a_{-j})=\bar{a}_j$ by Equation~\ref{eq:4} and~\ref{eq:7}, respectively. Thus, both expressions~\ref{eq:10} and~\ref{eq:11} hold true.
	
	Case 1.2:  $s_i(a_k,a_{j})=\bar{a}_i$ and $a_i=a'_i$. Then both expressions~\ref{eq:10} and~\ref{eq:11}  hold true because $s_i(a_{-i})=\bar{a}_i$ but $a_i\neq \bar{a}_i$.
	
	Case 2.0: $a_i\neq a'_i$. Then both expressions~\ref{eq:10} and~\ref{eq:11} hold true because $s_i(a_k,a_j)=a'_i$.
	
	Case 2.1:  $s_i(a_k,\bar{a}_{j})=a'_i$ and $a_i=a'_i$. Then by Equation~\ref{eq:1} $a_k\neq \bar{a}_k$ $a_j=\bar{a}_{j}$. Equation~\ref{eq:5} implies that $s_k(a_i,\bar{a}_{j})=\bar{a}_k$ and Equation~\ref{eq:9} implies that $s_j(a_i,a_k)=a'_j$. But $a_j=\bar{a}_{j}\neq a'_j$, so Expression~\ref{eq:10} holds; and $a_k\neq \bar{a}_k$, so Expression~\ref{eq:11} holds too.
	
	Case 2.2:  $s_i(\bar{a}_k,a_{j})=a'_i$ and $a_i=a'_i$. Then  by Equation~\ref{eq:2} $a_j\neq \bar{a}_j$ and $a_k=\bar{a}_{k}$. Equation~\ref{eq:8} implies that $s_j(a_i,\bar{a}_k)=\bar{a}_j$ and Equation~\ref{eq:6} implies that $s_k(a_i,a_j)=a'_k$. But $a_j\neq \bar{a}_j$, so Equation~\ref{eq:10} is satisfied; and $a_k=\bar{a}_{k}\neq \bar{a}_k$, so Equation~\ref{eq:11} is also satisfied.
\end{proof}

\section{Strong coalition-proofness in $n$-person games}\label{sec:n-person}

Let $G$ be a $n$-person game with $n\geq 4$. It is possible to show that every action profile can be supported as a conditional strategy equilibrium when we assume that a disagreement gives the worst payoff in the game. The idea is to define a conditional strategy profile with the property that no player can unilaterally deviate and create a fixed point. Hence, any deviation will result in the player getting 0. This motivates us to introduce the strong conditional equilibrium, which is a natural extension of strong Nash equilibrium to conditional strategies.

\begin{definition}
	A conditional strategy profile $s^*\in S$ is called a \textit{strong conditional equilibrium} if there is no subset $C\subseteq N$, $C\neq\emptyset$, and $s_C\in \prod_{j\in C} S_j$ such that for all $i\in C$
	\[
	U_i(s_C,s^*_{-C})> U_i(s^*).
	\]
\end{definition}

It is well known that a strong Nash equilibrium almost never exists. While Figure~\ref{fig:strong_nonexistence} shows that strong conditional equilibrium does not exist in general either,  we give below an easy-to-check sufficient condition for the existence of strong conditional equilibrium when conditional strategies are used.

\begin{figure}[h]
	\begin{minipage}[b]{\linewidth}\centering
		\[
		\begin{array}{ r|c|c| }
		\multicolumn{1}{r}{}
		&  \multicolumn{1}{c}{A}
		& \multicolumn{1}{c}{B}\\
		\cline{2-3}
		x&  2,1,0 & 0,2,1 \\
		\cline{2-3}
		y&  0,0,0 & 0,2,1 \\
		\cline{2-3}
		\end{array}
		\qquad
		\begin{array}{ r|c|c| }
		\multicolumn{1}{r}{}
		&  \multicolumn{1}{c}{A}
		& \multicolumn{1}{c}{B}\\
		\cline{2-3}
		x&  2,1,0 & 0,0,0 \\
		\cline{2-3}
		y&  1,0,2 & 1,0,2 \\
		\cline{2-3}
		\end{array}
		\]
	\end{minipage}
	\caption{A three player game which does not admit a strong conditional equilibrium. Player 3 chooses between the matrices $L$ (left) and $R$ (right).}
	\label{fig:strong_nonexistence}
\end{figure}

\begin{remark}
	\label{rem:strong_nonexistence}
	Figure~\ref{fig:strong_nonexistence} illustrates a three-player  counterexample in which there is no strong conditional equilibrium.
\end{remark}

To see this, note the following profitable deviations from every action profile. Players 2 and 3 can  profitably (and jointly) deviate to $(x,B,L)$ from profiles $(x,A,R)$ and $(x,A,L)$. Similarly, from $(x,B,L)$ and $(y,B,L)$, players 1 and 3 can profitably deviate to $(y,B,R)$. From $(y,A,R)$ and $(y,B,R)$, players 1 and 2 can profitably deviate to $(x,A,L)$ and $(x,A,R)$. And, it is clear that player 1 and player 2 can profitable deviate from profiles $(x,B,R)$ and $(y,A,L)$, respectively. For any action profile $a$, these deviations create a unique agreement, so they are profitable deviations from any conditional strategy profile whose unique fixed point is $a$. As a result, there is no strong conditional equilibrium in this game.

\begin{corollary}[2-player existence]
	\label{cor:strong2person}
	Let  $G=(S_i, U_i)_{i\in \{1,2\}}$ be a two-person game. Then, there exists a strong conditional equilibrium in pure strategies.
\end{corollary}

This corollary directly follows from the Theorem~\ref{thm:folk} because every Pareto optimal conditional equilibrium is a strong conditional equilibrium in two-person games.

\begin{theorem}[$n$-player sufficient condition]
	\label{thm:strongCE}
	Let $(S_i, U_i)_{i\in N}$ be an $n$-person game. If there is an action profile $\bar{a}$ where at least two players receive their highest payoffs in the game, then there exists a strong conditional equilibrium in pure strategies that supports  $\bar{a}$.
\end{theorem}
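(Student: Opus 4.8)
The plan is to designate two players $i$ and $j$ with $u_i(\bar a)=\max_a u_i(a)$ and $u_j(\bar a)=\max_a u_j(a)$ as ``enforcers'' and to build a profile $s$ whose only fixed point is $\bar a$, with the enforcers' off-path responses engineered so that no coalition of the remaining players can reconstitute an agreement other than $\bar a$. The first, cheap observation is that no profitable coalition can contain $i$ or $j$: for the profile $s$ built below we will have $U_l(s)=u_l(\bar a)=\max_a u_l(a)$ for $l\in\{i,j\}$, and since $U_l\le\max_a u_l(a)$ at every profile (payoffs are nonnegative, so even the disagreement value $0$ is dominated), such a player can never be made strictly better off and hence cannot belong to a coalition all of whose members strictly gain. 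It therefore suffices to defeat coalitions $C\subseteq N\setminus\{i,j\}$.

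For the construction I would let every ordinary player $k\notin\{i,j\}$ play the constant strategy $s_k\equiv\bar a_k$, so that any fixed point must agree with $\bar a$ on the ordinary coordinates not controlled by a deviating coalition. The work is then in defining $s_i$ and $s_j$. Writing $e=a_{N\setminus\{i,j\}}$ for the ``environment'' and $\bar e=\bar a_{N\setminus\{i,j\}}$, the pair $(s_i,s_j)$ at a fixed $e$ reduces to maps $f_e:A_j\to A_i$ and $g_e:A_i\to A_j$, and a simultaneous fixed point of $s_i,s_j$ is exactly a mutual fixed point, i.e.\ a fixed point of $f_e\circ g_e:A_i\to A_i$. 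I would choose $g_{\bar e}\equiv\bar a_j$ and $f_{\bar e}(\bar a_j)=\bar a_i$, so that at $e=\bar e$ the unique mutual fixed point is $(\bar a_i,\bar a_j)$; and for each $e\ne\bar e$ I would pick $f_e,g_e$ so that $f_e\circ g_e$ is fixed-point-free, which is possible precisely because $i$ and $j$ each have at least two actions (one factors a fixed-point-free, derangement-type map of $A_i$ through two elements of $A_j$). This makes $\bar a$ the unique fixed point of $s$, so $s\in B$ and $U_l(s)=u_l(\bar a)$ for every $l$.

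It then remains to rule out a profitable deviation by a nonempty $C\subseteq N\setminus\{i,j\}$. Consider any fixed point $a^*$ of $(s'_C,s_{-C})$. Because $i,j\notin C$ still play $s_i,s_j$, the coordinates $a^*_i,a^*_j$ must form a mutual fixed point at the environment $e^*=a^*_{N\setminus\{i,j\}}$; by construction this forces $e^*=\bar e$ and $(a^*_i,a^*_j)=(\bar a_i,\bar a_j)$, hence $a^*=\bar a$. Consequently $(s'_C,s_{-C})$ either has no fixed point---so $U_c=0\le u_c(\bar a)$ for every $c\in C$---or has the single fixed point $\bar a$, giving $U_c=u_c(\bar a)=U_c(s)$; in neither case does every member of $C$ strictly gain. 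Combined with the enforcer observation, this establishes that $s$ is a strong conditional equilibrium supporting $\bar a$.

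The main obstacle I foresee is the fixed-point-elimination gadget for the enforcers, and in particular the hypothesis hidden inside it: making $f_e\circ g_e$ fixed-point-free for every off-path environment needs \emph{each} enforcer to possess at least two actions. If one of the two maximal players is a dummy with a singleton action set, the product structure of $(s_i,s_j)$ forces a mutual fixed point at every environment, and a single genuine enforcer cannot by itself trap all coalitions; this degenerate case appears to require separate treatment, or a strengthening of the hypothesis to two players with non-trivial action sets attaining their maxima. I would therefore carry out the construction above under the assumption $|A_i|,|A_j|\ge 2$ and isolate the singleton-action situation as a special case.
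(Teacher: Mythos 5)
Your proposal is correct and follows essentially the same strategy as the paper's proof: designate the two players maximized at $\bar{a}$ as enforcers whose off-path conditional strategies admit no mutual fixed point except at $\bar{a}$, so that any coalition excluding them either preserves the fixed point $\bar{a}$ or destroys all fixed points and earns the disagreement payoff, while the enforcers themselves never gain from (joining) a deviation. Your per-environment gadget making $f_e\circ g_e$ fixed-point-free is just an explicit instantiation of the paper's abstract constraint on $s^*_i,s^*_j$, and the caveat you flag about singleton action sets is exactly the assumption the paper itself invokes when it asserts the construction is possible ``because both $i$ and $j$ have at least two different actions.''
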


\begin{proof}
	Let $\bar{a}$ be an action profile in which players $i$ and $j$ get their maximum payoffs. We construct a strong conditional equilibrium $s^*$ whose unique fixed point is $\bar{a}$. For every player $m$, define $s^*_m(\bar{a}_{-m})=\bar{a}_m$. Let $s_m^*(a_{-m})\in A_m$ for any $m$ with the following constraint for $i$ and $j$. For some  $l\in\{i,j\}$ fix $a_{-l}\in A_{-l}$. If $s^*_l(a_{-l})=a_l'$ for some $a'_l\in A_l$ such that $(a_l',a_k,a_{-kl})\neq \bar{a}$, then  $s^*_k(a_l',a_{-kl})\neq a_k$ where $k\in\{i,j\}$ and $k\neq l$. It is always possible to construct such $s^*$ because both $i$ and $j$ have at least two different actions. 
	
	First, notice that $s^*$ has a unique fixed point which is $\bar{a}$. Second, neither player $i$ nor player $j$ has a unilateral profitable deviation because they each receive their maximum payoff in the game at $s^*$. By the same token, neither $i$ nor $j$ would join any coalition to deviate from $s^*$. 
	
	In addition, for every coalition $C\subseteq N\setminus \{i,j\}$, $C\neq\emptyset$, and every $s\neq s^*$ we have $U_i(s_C,s^*_{-C})=0$. This is because there is no $a\in A$ such that $(s_C,s^*_{-C})(a)= a$. To see this, suppose that there exists $a\in A$ such that $(s_C,s^*_{-C})(a)= a$. Then  $s^*_i(a_{-i})=a_i$ and $s^*_j(a_{-j})=a_j$. However this contradicts  the constraint above. In plain words, the players excluding $i$ and $j$  cannot create another fixed point. Thus, there is neither a unilateral nor a joint profitable deviation from $s^*$, which implies that $s^*$ is a strong conditional equilibrium.
\end{proof}

\section{Extension to general utility functions}
\label{sec:general}

In this section we generalize the definition of function $U$ given in Section \ref{sec:definition} to cases in which there is no agreement.

Let $f_i:S\rightarrow \mathbb{R}$ be a function. Define the (extended) utility function $\tilde{U}_i:S\rightarrow \mathbb{R}$ as follows. 
\[
\tilde{U}_i(s) = 
\begin{cases}
u_i(a) & \text{if}~s\in B \\
f_i(s) & \text{if}~s\in S\setminus B,
\end{cases}
\] 
where $a=s(a)$.

Let $(S_i, \tilde{U}_i)_{i\in N}$ denote the conditional extension of the game $(A_i, u_i)_{i\in N}$ where $\tilde{U}_i$ extends $u_i$ as above.

In section~\ref{sec:definition}, we assumed that the extended utility function assigns the worst payoff to disagreements. Next, we explore the case in which if a conditional strategy profile $s$ is a disagreement, then player $i$ gets the average of payoffs. For every $s\not\in B$  define the set 
\[
D(s) = 
\begin{cases}
\cup_{i\in N} \cup_{a_{-i}\in A_{-i}} \{(s_i(a_{-i}),a_{-i})\} & \text{if}~s\text{ has no fixed point}  \\
\{a\in A | s(a)=a \} & \text{otherwise}.
\end{cases}
\] 
Define $f_i(s):=\frac{1}{|D(s)|}\sum_{a\in D(s)} u_i(a)$. Let $(S_i, \pi_i)_{i\in N}$ denote the game in conditional extension where players each receive their average payoff from a disagreement.

\begin{theorem}[Two-person existence]
	\label{thm:general}
	Every two-person game $(S_i, \pi_i)_{i\in \{1,2\}}$ has a conditional strategy equilibrium in pure strategies.
\end{theorem}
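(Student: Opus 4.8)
The plan is to prove existence of a pure conditional strategy equilibrium in every two-person game under the generalized utility function $U_i$ that assigns average payoffs to disagreements. Unlike the original setting where a disagreement yields zero (making deviations trivially unprofitable), here a disagreement yields the average over $D(s)$, so the structure of the argument must change. I would first observe that Theorem~\ref{thm:general} only asserts existence for $n=2$, which suggests the proof should exploit the special two-person structure rather than attempt the general fixed-point machinery.

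First I would set up the two candidate types of equilibria. The most natural starting point is to look for an \emph{agreement} that is a CSE, since on $B$ we have $U_i|_B = u_i$ and the analysis reduces to ordinary payoffs. A promising candidate is an agreement supporting an action profile $\bar{a}$ at which both players receive at least their individually rational (maximin) payoffs: by the Folk theorem (Theorem~\ref{thm:folk}), any such profile is supportable as a CSE under the \emph{original} zero-disagreement utility, using the punishment strategies $s_i(a_{-i}) \in \argmin_{a_i'} u_{-i}(a_i', a_{-i})$ off the agreement path. The key step is then to check whether that same construction remains an equilibrium under the new $U_i$. The worry is that a deviating player who breaks the agreement no longer gets $0$ but instead gets the average payoff $f_i$ over the set $D(s_i', s_{-i})$, which could conceivably exceed her maximin value. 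So the first substantive task is to bound $f_i$ after a deviation: I would argue that when player $i$ deviates to some $s_i'$ and the result is a disagreement, the opponent's punishment strategy $s_{-i}$ still forces the relevant action profiles in $D$ to give player $i$ no more than her maximin payoff in the worst component, and then control the average.

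The main obstacle is precisely this averaging: $f_i(s) = \frac{1}{|D(s)|}\sum_{a\in D(s)} u_i(a)$ can mix high-payoff and low-payoff profiles, so a single minmax-style punishment on each coordinate need not keep the \emph{average} below the target. I expect the hard part to be choosing the off-path responses $s_{-i}(a_{-(-i)})$ so that for \emph{every} possible deviation $s_i'$ the induced set $D(s_i', s_{-i})$ has average payoff (to $i$) at most $U_i(\bar{a})$. One clean way to handle this is to pick $\bar{a}$ to be a profile maximizing $u_i$ for at least one player, or to take $\bar{a}$ Pareto efficient among maximin-feasible profiles, and to design the punishment to hold each summand down simultaneously. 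Concretely, if the opponent always plays the action minimizing $i$'s payoff against whatever $i$ does, then every profile in $D$ that results from an $i$-deviation gives $i$ at most $\max_{a_i}\min_{a_{-i}} u_i$, i.e.\ the maximin value; choosing $\bar{a}$ so that $U_i(\bar{a})$ weakly exceeds this maximin value for both players then makes the average bounded correctly. Finally I would verify that the constructed $s$ is genuinely an agreement (the fixed point $\bar{a}$ Pareto-dominates any other fixed point, exactly as in the Folk theorem proof) and assemble the no-deviation checks for both players to conclude that $s$ is a CSE under the generalized utility.
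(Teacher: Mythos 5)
Your construction breaks at exactly the point you flag as the hard part, and the fix you propose does not close it. When player $i$ deviates to some $s_i'$ and the profile $(s_i',s_{-i})$ has \emph{no} fixed point, the paper's definition gives $D(s_i',s_{-i})=\cup_{k\in N}\cup_{a_{-k}\in A_{-k}}\{(s_k(a_{-k}),a_{-k})\}$, a union over \emph{both} players' graphs. The opponent's punishment $s_{-i}(a_i)\in\argmin_{a_{-i}'}u_i(a_i,a_{-i}')$ only controls the profiles $(a_i,s_{-i}(a_i))$; it has no grip on the profiles $(s_i'(a_{-i}),a_{-i})$ contributed by the deviator's own graph, where $a_{-i}$ ranges over \emph{all} of $A_{-i}$. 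The deviator can set $s_i'(a_{-i})\in\argmax_{a_i'}u_i(a_i',a_{-i})$ pointwise (breaking ties so as to destroy all fixed points), so half of $D$ consists of her best-response payoffs and the average $f_i$ can strictly exceed $u_i(\bar a)$. Concretely, take the $2\times 2$ game with $u_1(T,L)=1$, $u_1(T,R)=0$, $u_1(B,L)=100$, $u_1(B,R)=0$ (player 1's maximin is $0$) and $\bar a=(T,L)$. Your construction forces $s_2(T)=L$ and $s_2(B)=R$; the deviation $s_1'(L)=B$, $s_1'(R)=T$ has no fixed point, so $D=\{(B,L),(T,R),(T,L),(B,R)\}$ and player 1 gets $101/4>1=u_1(\bar a)$. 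Hence your key claim---that every profile in $D$ after an $i$-deviation gives $i$ at most her maximin---is false, and no choice of $\bar a$ repairs it in general: you would need $u_i(\bar a)$ to dominate such averages for \emph{both} players simultaneously, which requires both players' high payoffs to coexist at one profile (the special hypothesis of Theorem~\ref{thm:strongCE}, not available in every game). This is presumably why the paper claims only existence under the generalized outcome function, not a folk theorem.

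The paper's actual proof avoids punishment entirely and uses constancy as the enforcement device. If $G$ has a pure Nash equilibrium, constant strategies at it form a CSE, since against a constant opponent every deviation has a unique fixed point and the Nash inequality applies. Otherwise it takes $\bar s$ with $\bar s_k(a_{-k})\in BR_k(a_{-k})$ for both players, which has no fixed point; a no-fixed-point deviation is shown unprofitable by comparing the averages over $D$ term by term (each summand is weakly worse than the corresponding best-response summand in $D(\bar s)$), so any profitable deviation $s_i$ must create fixed points. One then replaces $s_i$ by the \emph{constant} strategy at the best fixed point $a'$, which weakly improves $i$'s payoff by the averaging inequality and, crucially, leaves the opponent unable to escape agreement: against a constant strategy every reply has exactly one fixed point, and $\bar s_j$ already best-responds to $a_i'$ there. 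If you want to salvage your draft, the lesson to internalize is that under the generalized $U$ the disagreement region cannot be made uniformly bad by minimaxing; it must be collapsed, and constant strategies are what collapse it.
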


\begin{proof}
	First, suppose that $G$ has a pure Nash equilibrium $a^*$. Then, conditional strategy profile $s$ in which each player plays the constant strategy $s_i(a_{-i})=a_{i}^*$ for all $a_{-i}\in A_{-i}$ is a conditional strategy equilibrium. This is because if player $i$ unilaterally deviates to $s_i'$ in which  $s_i'(a_{-i}^*)=a_i'\neq a_i^*$ then the unique fixed point of $(s_i',s_{-i})$ is $(a'_i,a_{-i}^*)$. Thus $i$ would get $u_i(a'_i,a_{-i}^*)\leq u_i(a^*)$.

	Second, suppose that $G$ does not have a pure Nash equilibrium. Then, consider conditional strategy profile $\bar{s}$ in which for every $i$ and every $a_{-i}$ $\bar{s}_i(a_{-i})\in BR_i(a_{-i})$. It is clear that $\bar{s}$ has no fixed point because otherwise the fixed point would be a pure Nash equilibrium contradicting our supposition.
	
	There are two cases to consider: (i) $\bar{s}$ is a conditional strategy equilibrium; (ii) there is at least a player $i$ who can unilaterally deviate from $\bar{s}$ to some best response $s_i$ and increase their payoff, i.e. $\pi_i(s_i,\bar{s}_{-i})>\pi_i(\bar{s})$. We only need to consider case (ii). Next we construct a conditional strategy equilibrium $s^*=(s_i^*,\bar{s}_{-i})$ in which $s_i^*$ is a constant conditional strategy.
	
	First, suppose that $(s_i,\bar{s}_{-i})$ has no fixed points. Then for all  $a_{-i}\in A_{-i}$ $u_i(s_i(a_{-i}),a_{-i})\leq u_i(\bar{s})$ since $\bar{s}_i(a_{-i})\in BR_i(a_{-i})$. Note that $|D(s_i,\bar{s}_{-i})|=|D(\bar{s})|$ because both profiles have no fixed points, therefore 
	$$
	\pi_i(s_i,\bar{s}_{-i})=\frac{1}{|D(s_i,\bar{s}_{-i})|}\sum_{a\in D(s_i,\bar{s}_{-i})} u_i(a)$$
	$$\leq \frac{1}{D(\bar{s})} \sum_{a\in D(\bar{s})} u_i(a) = \pi_i(\bar{s})
	$$ 
	so $s_i$ is not a profitable deviation for player $i$. Hence, for $s_i$ to be a profitable deviation, $(s_i,\bar{s}_{-i})$ must have one or more  fixed points.
	
	Second, suppose there are multiple fixed points. Then since $|D(s_i,\bar{s}_{-i})|$ is finite, there is $a'\in A$ such that $u_i(a')\geq u_i(a)$ for all $a\in D(s_i,\bar{s}_{-i})$. Hence 
	\begin{equation}\label{eq:general-f-2}
	\frac{1}{|D(s_i,\bar{s}_{-i})|}\sum_{a\in D(s_i,\bar{s}_{-i})} u_i(a)\leq u_i(a').
	\end{equation}
	
	Now define $s_i^*(a_{-i})=a'_i$ for all $a_{-i}\in A_{-i}$. By (\ref{eq:general-f-2}), since $s_i$ is a best response to $\bar{s}_{-i}$, $s_i^*$ is also a best response to $\bar{s}_{-i}$.

	Finally we show that $\bar{s}_{-i}$ is a best response to constant conditional strategy $s_i^*$. This is because player $j\neq i$ is already best responding to player $i$'s constant action by construction of $\bar{s}_{j}$ and player $j$ cannot create another fixed point by unilaterally deviating from $(s^*_i,\bar{s}_{-i})$ because $s^*_i$ is a constant conditional strategy. Player $j$ cannot deviate to a conditional strategy that admits no fixed point either. Thus, it implies that $(s^*_i,\bar{s}_{-i})$ is a conditional strategy equilibrium.
\end{proof}

\section{Conditional mixed extension}
\label{sec:mixed}

In this section, we give a construction of the conditional extension of a game in mixed extension in an analogous way to the conditional extension of a game in pure strategies. Regardless of the extended utility function, we show that every `conditional mixed strategy' can be induced by a probability measure over `conditional pure strategies'.

Let $G=(\Delta A_i, u_i)_{i\in N}$ be an $n$-person noncooperative game in mixed extension, where $N=\{1,...,n\}$ is the finite set of players, $\Delta A_i$ the set of all probability distributions over the finite action set $A_i$, which is the simplex in $\mathbb{R}^{m_i-1}$, and $u_i:\Delta A\rightarrow \mathbb{R}$ the von Neumann-Morgenstern \citep{neumann1944} expected utility function of player $i\in N$. A mixed strategy profile is denoted by $p\in \Delta A=\times_{i\in N} \Delta A_i$.

Let $\mathcal{A}_{-i}$ be the Borel $\sigma$-algebra over $\Delta A_{-i}$. The set of conditional mixed strategies of player $i$ is given by $\Sigma_i=\{\sigma_i:\Delta A_{-i}\rightarrow \Delta A_i~|~\sigma_i~\text{is a simple and}~\mathcal{A}_{-i}\text{-measurable}\}$. A conditional mixed strategy profile is $\sigma \in \Sigma=\times_{i\in N} \Sigma_{i}$. A conditional strategy profile $\sigma\in \Sigma$ is an \textit{agreement} if there exists a unique $p\in \Delta A$ such that $\sigma(p)=p$---i.e., for all $i$ $\sigma_i(p_{-i})=p_i$---which is the unique fixed point of the conditional strategy profile $\sigma$, which always exists because for each $i$ $\Delta A_i$ is nonempty. The intuition is that everybody agrees on the unique strategy profile $p$. If $\sigma\in \Sigma$ is not an agreement, then it is called a \textit{disagreement}. 

\subsection{Conditional mixed extension and the mixed extension of the  pure conditional strategies}

Let $\hat{S}_i = \{\hat{s}_i:\Delta A_{-i}\rightarrow A_i~|~\hat{s}_i~\text{is}~\mathcal{A}_{-i}\text{-measurable}\}$, which is the set of pure conditional strategies of player $i$ against mixed strategies of the others. We next extend $\hat{S}_i$ to $\Delta \hat{S}_i$, which is the set of all probability measures with finite support over $\hat{S}_i$. Every probability measure $\mu\in \Delta \hat{S}_i$ induces a function $\sigma_i\in \Sigma_{i}$. Define function 
\begin{equation}
\label{eq:phi}
\begin{array}{c c c c}
\phi: & \Delta \hat{S}_i &\rightarrow& \Sigma_{i} \\
& \mu &\mapsto & \sigma_{i}^\mu
\end{array}
\end{equation}
where for all $q_{-i}$ 
\begin{equation*}
\sigma_{i}^\mu (q_{-i}) = \int_{\hat{S}_i} \hat{s}_i (q_{-i}) d\mu(\hat{s}_i),
\end{equation*}
which is a probability distribution over $A_i$ given $q_{-i}$.

Next, we show that every conditional mixed strategy $\sigma_i$ can be induced by a probability measure $\mu$ over conditional (pure) strategies.

\begin{theorem}[Mixed extension]
	\label{thm:conditional_mixed_extension}
	For every $\sigma_{i}\in\Sigma_{i}$ there exists a $\mu\in \Delta \hat{S}_i$ such that $\phi(\mu)=\sigma_{i}$ where $\phi$ is defined in (\ref{eq:phi}).
\end{theorem}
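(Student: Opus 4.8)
The plan is to exploit the hypothesis that $\sigma_i$ is \emph{simple}, so it takes only finitely many values, and then to build $\mu$ explicitly as a product (independent-randomization) measure over piecewise-constant pure conditional strategies. Since the theorem only asks for the existence of some finitely supported $\mu$, a concrete construction suffices and no fixed-point or compactness argument is needed.

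First I would record the structure coming from simplicity. Because $\sigma_i:\Delta A_{-i}\to\Delta A_i$ is simple and $\mathcal{A}_{-i}$-measurable, there are finitely many distinct values $p^1,\dots,p^L\in\Delta A_i$ and a finite measurable partition $E_1,\dots,E_L$ of $\Delta A_{-i}$, with $E_\ell=\sigma_i^{-1}(\{p^\ell\})\in\mathcal{A}_{-i}$, such that $\sigma_i(q_{-i})=p^\ell$ for every $q_{-i}\in E_\ell$. Writing each value over the vertices of the simplex, $p^\ell=\sum_{r=1}^{m_i} p^\ell_r\,\delta_{a_i^r}$ with $p^\ell_r\ge 0$ and $\sum_r p^\ell_r=1$.

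Next I would single out the relevant pure conditional strategies. For each choice function $c:\{1,\dots,L\}\to\{1,\dots,m_i\}$ define the piecewise-constant strategy $\hat{s}^c_i(q_{-i})=a_i^{c(\ell)}$ for $q_{-i}\in E_\ell$; this is $\mathcal{A}_{-i}$-measurable because the $E_\ell$ are measurable, so $\hat{s}^c_i\in\hat{S}_i$, and there are exactly $m_i^L$ of them. I would then let $\mu$ put mass $\mu(\hat{s}^c_i)=\prod_{\ell=1}^{L} p^\ell_{c(\ell)}$ on $\hat{s}^c_i$; equivalently $\mu$ is the product measure $\bigotimes_{\ell=1}^L p^\ell$ on the finite index set of choice functions. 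This is a finitely supported probability measure, hence $\mu\in\Delta\hat{S}_i$. Finally I would verify $\phi(\mu)=\sigma_i$ pointwise: fixing $q_{-i}$ and letting $\ell$ be the unique index with $q_{-i}\in E_\ell$,
\[
\sigma^\mu_i(q_{-i})=\int_{\hat{S}_i}\hat{s}_i(q_{-i})\,d\mu(\hat{s}_i)=\sum_c \mu(\hat{s}^c_i)\,\delta_{a_i^{c(\ell)}}=\sum_{r=1}^{m_i}\Big(\sum_{c:\,c(\ell)=r}\mu(\hat{s}^c_i)\Big)\delta_{a_i^r},
\]
and because $\mu$ is a product measure the inner sum is the $\ell$-th marginal weight on $r$, namely $p^\ell_r$, so the right-hand side equals $\sum_r p^\ell_r\,\delta_{a_i^r}=p^\ell=\sigma_i(q_{-i})$.

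The technical core — and the only step requiring any care — is this marginal identity: it is what forces $\mu$ to have the prescribed one-dimensional marginal $p^\ell$ on each piece $E_\ell$, and it is the reason independence across pieces is \emph{convenient}. In fact any coupling of the $p^\ell$ with those marginals would serve equally well, so the product measure is merely the canonical choice. The measurability of each $\hat{s}^c_i$ and the finiteness of the support of $\mu$ are routine consequences of simplicity and do not present an obstacle.
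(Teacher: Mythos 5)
Your proof is correct and follows essentially the same route as the paper's: both exploit simplicity to partition $\Delta A_{-i}$ into finitely many cells, place the product measure over piecewise-constant ($\mathcal{P}$-measurable) pure conditional strategies — your choice functions $c$ are exactly the paper's tuples $(a_i^1,\dots,a_i^L)\in\times_{l=1}^L A_i$ — and verify $\phi(\mu)=\sigma_i$ by the same marginal computation. Your closing observation that independence is inessential and any coupling with the prescribed marginals would do is a nice extra, but the construction itself coincides with the paper's.
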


\begin{proof}
	Fix $\sigma_{i}\in\Sigma_{i}$ and consider the induced partition over $\Delta A_{-i}$  $\mathcal{P}_{\sigma_{i}}=\{X_1,...,X_{L_{\sigma_{i}}}\}$. For ease of notation, we denote $\mathcal{P}=\mathcal{P}_{\sigma_{i}}$ and $L=L_{\sigma_{i}}$. Note that $\sigma_{i}$ can be written as
	\[
	\begin{array}{c c c c}
	\sigma_i: &\mathcal{P} &\rightarrow& \Delta A_i \\
	& l &\mapsto & \sigma_{i}(l)=\mu_l
	\end{array}
	\]
	where $\mu_l$ is a probability measure over $A_i$. Put differently, for $a_i\in A_i$, $\mu_l(a_i)=\sigma_{i}(q_{-i})(a_i)$ for all $q_{-i}\in X_l$. The finite collection of probability spaces $(A_i,2^{A_i}, \mu_l)^{L}_{l=1}$ induces the product space $(\times^{L}_{l=1} A_i,2^{\times^{L}_{l=1} A_i}, \mu)$ where $\mu$ is the usual product measure over $2^{\times^{L}_{l=1}A_i}$ defined by  $\mu((a^1_i,a^2_i,...,a^L_i))=\prod_{l=1}^{L} \mu_l(a^l_i)$ for all $(a^1_i,a^2_i,...,a^L_i)$ in  $\times^{L}_{l=1} A_i=\{(a^1_i,a^2_i,...,a^L_i)|a^l_i\in A_i \}$.
	
	Note that every element $(a^1_i,a^2_i,...,a^L_i)\in\times^{L}_{l=1} A_i$ can be identified with a $\hat{s}_i\in \hat{S}_i|_{\mathcal{P}}$ where $\hat{S}_i|_{\mathcal{P}}=\{\hat{s}_i\in \hat{S}_i |\hat{s}_i~\text{is}~\mathcal{P}\text{-measurable}\}$, i.e.,  $\hat{s}_i(q_{-i})=a^l_i$ for all $l$ and all $q_{-i}\in X_l$. Thus, $\mu$ is a measure supported by $\hat{S}_i|_{\mathcal{P}}$ such that for all $\hat{s}_i\in \hat{S}_i|_{\mathcal{P}}$, $\mu(\hat{s}_i)=\prod_{l=1}^{L} \mu_l(\hat{s}_i(l))$ where $\hat{s}_i(l)=\hat{s}_i(q_{-i})$ for all $q_{-i}\in X_l$.
	
	Next we show that $\phi(\mu)=\sigma_{i}$. Recall that by (\ref{eq:phi}) $\phi(\mu)=\sigma^\mu_{i}$, which is defined by for all $q_{-i}$ and all $a_i$ 
	\[
	\sigma_{i}^\mu (q_{-i})(a_i) = \int_{\hat{S}_i} \hat{s}_i (q_{-i})(a_i) d\mu(\hat{s}_i)= \int_{\hat{S}_i|_{\mathcal{P}}} \hat{s}_i (q_{-i})(a_i) d\mu(\hat{s}_i).
	\]
	Note that 
	\[
	\hat{s}_i (q_{-i})(a_i) =
	\begin{cases}
	0~\text{if}~\hat{s}_i (q_{-i}) \neq a_i\\
	1~\text{if}~\hat{s}_i (q_{-i}) = a_i,
	\end{cases}
	\]
	and, therefore,  
	\[
	\sigma_{i}^\mu (q_{-i})(a_i) = 
	\]
	\[
	\int \mathds{1}_{\{\hat{s}_i\in \hat{S}_i|_{\mathcal{P}}:\hat{s}_i(q_{-i})=a_i\}} d\mu(\hat{s}_i) = \mu(\{\hat{s}_i\in \hat{S}_i|_{\mathcal{P}}:\hat{s}_i(q_{-i})=a_i\}).
	\]
	Fix $a_i\in A_i$ and $q_{-i}\in X_{l'}$ for some $l'\in\{1,...,L\}$.  Then, $\{\hat{s}_i\in \hat{S}_i|_{\mathcal{P}}:\hat{s}_i(q_{-i})=a_i\}=\{\times^{l'-1}_{l=1} A_i \times \{a_i\} \times^{L}_{l=l'+1}A_i\}$ and since $\mu$ is the product measure,
	\[
	\mu(\times^{l'-1}_{l=1} A_i \times \{a_i\} \times^{L}_{l=l'+1}A_i) = \mu_{l'}(a_i)=\sigma_i(q_{-i})(a_i).
	\]
	This implies that for all $q_{-i}\in \Delta A_{-i}$ and all $a_i\in A_i$, $\sigma_{i}^\mu (q_{-i})(a_i)=\sigma_i(q_{-i})(a_i)$, i.e., $\phi(\mu)=\sigma_{i}$.
\end{proof}

Theorem~\ref{thm:conditional_mixed_extension} shows that mixed conditional strategies can be obtained by mixing over pure conditional strategies. It is then relatively straight-forward to show that there is always a mixed conditional strategy equilibrium in every $n$-person game, irrespective of the extended utility function. This is because a Nash equilibrium in the form of a constant conditional mixed strategy profile creates a unique agreement and there is no unilateral profitable deviation from it because (i) the agreement constitutes a Nash equilibrium, and (ii) a player cannot create a disagreement by a unilateral  deviation when the other players play constant conditional strategies.

	\bibliographystyle{chicago}
	\bibliography{ref}

\begin{thebibliography}{}

\bibitem[\protect\citeauthoryear{Aumann}{Aumann}{1959}]{aumann1959}
Aumann, R.~J. (1959).
\newblock Acceptable points in general cooperative $n$-person games.
\newblock In R.~D. Luce and A.~W. Tucker (Eds.), {\em Contributions to the
  {T}heory of {G}ames IV, Annals of Mathematical Study 40}, pp.\  287--324.
  Princeton: Princeton University Press.

\bibitem[\protect\citeauthoryear{Brams}{Brams}{1994}]{brams1994}
Brams, S. (1994).
\newblock {\em Theory of Moves}.
\newblock Cambridge, UK: Cambridge University Press.

\bibitem[\protect\citeauthoryear{Gilboa and Schmeidler}{Gilboa and
  Schmeidler}{1989}]{gilboa1989}
Gilboa, I. and D.~Schmeidler (1989).
\newblock Maxmin expected utility with non-unique prior.
\newblock {\em Journal of Mathematical Economics\/}~{\em 18\/}(2), 141--153.

\bibitem[\protect\citeauthoryear{Howard}{Howard}{1971}]{howard1971}
Howard, N. (1971).
\newblock {\em Paradoxes of {R}ationality: {T}heory of {M}etagames and
  {P}olitical {B}ehaviour}.
\newblock {MIT} {P}ress.

\bibitem[\protect\citeauthoryear{Kalai, Kalai, Lehrer, and Samet}{Kalai
  et~al.}{2010}]{kalai2010}
Kalai, A.~T., E.~Kalai, E.~Lehrer, and D.~Samet (2010).
\newblock A commitment folk theorem.
\newblock {\em Games and Economic Behavior\/}~{\em 69\/}(1), 127--137.

\bibitem[\protect\citeauthoryear{Maschler, Solan, and Zamir}{Maschler
  et~al.}{2013}]{maschler2013}
Maschler, M., E.~Solan, and S.~Zamir (2013).
\newblock {\em Game {T}heory}.
\newblock Cambridge University Press.

\bibitem[\protect\citeauthoryear{Nash}{Nash}{1953}]{nash1953}
Nash, J. (1953, January).
\newblock Two-{Person} {Cooperative} {Games}.
\newblock {\em Econometrica\/}~{\em 21\/}(1), 128.

\bibitem[\protect\citeauthoryear{Nash}{Nash}{1950}]{nash1950a}
Nash, J.~F. (1950).
\newblock Equilibrium points in n-person games.
\newblock {\em Proceedings of the {N}ational {A}cademy of {S}ciences\/}~{\em
  36\/}(1), 48--49.

\bibitem[\protect\citeauthoryear{Schelling}{Schelling}{1956}]{schelling1956essay}
Schelling, T.~C. (1956).
\newblock An essay on bargaining.
\newblock {\em The American Economic Review\/}~{\em 46\/}(3), 281--306.

\bibitem[\protect\citeauthoryear{Tennenholtz}{Tennenholtz}{2004}]{tennenholtz2004}
Tennenholtz, M. (2004).
\newblock Program equilibrium.
\newblock {\em Games and Economic Behavior\/}~{\em 49\/}(2), 363--373.

\bibitem[\protect\citeauthoryear{von Neumann and Morgenstern}{von Neumann and
  Morgenstern}{1944}]{neumann1944}
von Neumann, J. and O.~Morgenstern (1944).
\newblock {\em Theory of Games and Economic Behavior\/} (1953, 3rd ed.).
\newblock Princeton: Princeton University Press.

\end{thebibliography}

\end{document}